\newtheorem{theorem}{Theorem}[section]
\newtheorem{proposition}[theorem]{Proposition}
\theoremstyle{definition}
\newtheorem{example}[theorem]{Example}
\newcommand{\Prob}{\mathbb{P}}
\newcommand{\E}{\mathbb{E}}
\title{\bf Functionally Generated Portfolios \\
       Under Stochastic Transaction Costs:\\
       Theory and Empirical Evidence}
\author{Nader Karimi\thanks{
Department of Mathematics and Computer Science, Amirkabir University of Technology, Tehran 1591634311, Iran, e-mail: nkarimi@aut.ac.ir}
, \and  Erfan Salavati\thanks{
Department of Mathematics and Computer Science, Amirkabir University of Technology, Tehran 1591634311, Iran, e-mail: erfan.salavati@gmail.com}
}
\date{\today}
\begin{document}
\maketitle

\begin{abstract}
\noindent
Assuming frictionless trading, classical stochastic portfolio theory (SPT) provides \emph{relative arbitrage} strategies. However, the costs associated with real-world execution are state-dependent, volatile, and under increasing stress during liquidity shocks. Using an Itô diffusion that may be connected with asset prices, we extend SPT to a continuous-time equity market with \textit{proportional, stochastic transaction costs}. We derive closed-form lower bounds on \emph{cost-adjusted relative wealth} for a large class of functionally generated portfolios; these bounds provide sufficient conditions for relative arbitrage to survive random costs. A limit-order-book cost proxy in conjunction with a Milstein scheme validates the theoretical order-of-magnitude estimates.
Finally, we use intraday bid-ask spreads as a stand-in for cost volatility in a back-test of CRSP small-cap data (1994–2024). Despite experiencing larger declines during the 2008 and 2020 liquidity crises, diversity- and entropy-weighted portfolios continue to beat the value-weighted benchmark by $3.6$ and $2.9$ percentage points annually, respectively, \emph{after} cost deduction.
\vspace{0.5em}

\noindent\textbf{Keywords:} Stochastic portfolio theory; transaction costs; functionally generated portfolios; arbitrage bounds; liquidity risk.

\smallskip
\end{abstract}

\section{Introduction}\label{sec:intro}
Since its inception by \cite{Fernholz2002} and the survey that followed in \cite{Fernholz2013}, \textit{Stochastic Portfolio Theory} (SPT) has become a flexible substitute for traditional no-arbitrage/replication techniques. Market diversity, capital distribution curvature, and volatility clustering are examples of pathwise features from which SPT extracts structural alpha by removing the need for an equivalent martingale measure and concentrating on \emph{relative} (arbitrage-free) performance.  The main tool is Fernholz' \emph{masterformula}, which breaks down the log-relative wealth of a functionally generated portfolio (FGP) into (i) a local growth term driven by the selected generator~$G$ and (ii) a non-negative drift proportional to the market's \emph{excess–growth rate} $\gamma^\ast_t$\cite{KaratzasRuf2017,marjosola2021problem}.

Empirical research conducted over the last ten years has validated the out-performance of $p$-variation, diversity-weighted, and entropy-weighted FGPs in U. S. European sectors \cite{caviglia2002key}, equities \cite{o2015adaptive,pokou2024empirical}, and even cryptocurrency markets \cite{osman2023diversification}.However, almost all of these analyses make the assumption that trading is frictionless. Execution in the real world is anything but smooth: when relative arbitrage opportunities are most appealing, bid-ask spreads widen and market-impact costs soar \cite{farmer2013efficiency,koutmos2018liquidity}. Ignoring these frictions can result in unsuccessful implementations and overstate the achievable edge\cite{swade2023equally}.

Few attempts have been made to combine SPT and transaction costs. The study \cite{ruf2020impact} focuses on ad-valoremfees, whereas \cite{fukasawa2024model} has grafted deterministic quadratic impact onto FGPs. Almgren-Chriss-type models, which lack the pathwise, multi-asset flavor of SPT, are usually used to handle extensions to rough volatility or high-frequency execution\cite{AlmgrenChriss2001,ObizhaevaWang2013,GarleanuPedersen2013,karimi2024stochastic}. To the best of our
knowledge, no published work incorporates a stochastic, potentially price-correlated cost process into the master formula.
This disparity is crucial because, according to empirical estimates, impact intensities and spreads exhibit mean-reverting diffusion behavior with jumps under stress \cite{Madhavan2012,FoucaultKadanKandel2005}.

\paragraph{Contributions.}
\begin{enumerate}[label=(\roman*),itemsep=4pt]
\item We model the proportional cost process $\kappa_t$ as an
      Ornstein–Uhlenbeck diffusion possibly correlated with the
      aggregate market Brownian motion; Section~\ref{sec:model} embeds
      $\kappa_t$ into a $d$–asset SPT economy.
\item Exploiting piecewise‐constant rebalancing, we derive a
      \emph{cost-adjusted master inequality}
      (Theorem~\ref{thm:master}) that is valid pathwise and isolates the
      tug–of-war between excess growth and stochastic cost erosion.
\item For diversity‐ and entropy‐weighted generators we prove explicit
      conditions under which relative arbitrage survives random costs,
      thereby extending \cite{KaratzasRuf2017} to a frictional setting
      (Proposition~\ref{prop:diversity}).
\item A Milstein Monte-Carlo experiment confirms the theoretical
      $\mathcal{O}(\sqrt{T})$ cost growth, while a 30-year CRSP
      small-cap study (1994–2024) shows that coarsely rebalanced FGPs
      still beat the value-weighted benchmark by 2.9–3.6 percent per
      annum \emph{after} realistic spreads and turnover charges.
\end{enumerate}

The remainder of the paper proceeds as follows.\
Section~\ref{sec:model} formalises the market and cost dynamics.\
Section~\ref{sec:master} states and proves the
cost-adjusted master inequality, and applies the result to diversity, and
entropy-weighted portfolios.\
Section~\ref{sec:sim} provides a Monte-Carlo validation, and
Section~\ref{sec:empirical} presents the empirical CRSP evidence.\
Section~\ref{sec:concl} concludes and outlines future research on rough
cost diffusions and mean‐field execution games.
\section{Market with Stochastic Transaction Costs}\label{sec:model}

Let $(\Omega,\mathcal F,(\mathcal F_t)_{t\!\ge0},\Prob)$ satisfy the usual conditions.
We observe $d\!\ge2$ strictly positive Itô stocks
\begin{equation}\label{eq:stock}
\frac{dS_t^i}{S_t^i} \;=\; b_t^i\,dt + \sum_{k=1}^{d} \sigma_t^{ik}\,dW_t^k ,
\qquad i=1,\dots,d,
\end{equation}
with drift $b_t$ and nonsingular volatility matrix $\sigma_t$.
Let $\mu_t^i=S_t^i/\sum_{j}S_t^j$ denote \emph{market weight}.
Trading incurs \emph{proportional} cost $\kappa_t$ per unit turnover, where
\[
d\kappa_t \;=\; \alpha\,(\bar\kappa-\kappa_t)\,dt
              + \eta\,dB_t,\quad \kappa_0>0,
\]
and $B$ may be correlated with the market Brownian motion $W$.

\paragraph{Portfolios with friction.}
A self-financing strategy is characterised by weight vector $\pi_t\in\Delta^{d}$.
Buying $\Delta\pi_t$ over $[t,t+dt]$ incurs cost $\kappa_t\,\|d\pi_t\|_1$.
Define \emph{cost-adjusted wealth} $V_t$ solving
\begin{equation}\label{eq:wealth-cost}
\frac{dV_t}{V_t} \;=\; \sum_{i}\pi_t^i\frac{dS_t^i}{S_t^i}
                     -\kappa_t\, \sum_{i} |d\pi_t^i|.
\end{equation}

\subsection{Functionally Generated Portfolios under Cost}\label{ssec:fgp}
Fix $G\!\in C^2(\Delta^{d})$, strictly concave.  The usual FGP weights are  \cite{Fernholz2002}
$$\displaystyle \pi_{G}^i(\mu)=\mu^i\left[\partial_i\!\log G(\mu)
       + 1 - \sum_j\mu^j\partial_j\!\log G(\mu)\right].$$
We adopt \emph{piecewise-constant rebalancing} at a meshed grid
$t_n=n\Delta$ to limit total variation; Section \ref{ssec:diversity} optimises~$\Delta$.

\section{Cost-Adjusted Master Inequality}\label{sec:master}
Fernholz’ classical \emph{master formula} expresses the log-relative
performance of any functionally generated portfolio (FGP) versus the
market as a sum of (i)~a pathwise term involving the generating
function~$G$ and (ii)~a non-negative drift driven by the market’s
intrinsic ”excess growth rate.”  That result, however, is derived under
frictionless trading and therefore overstates the real-world edge of an
FGP.  
Our aim in this section is to \emph{embed proportional, stochastic
transaction costs directly into the master formula}.  By imposing a
fixed rebalancing mesh~$\Delta$ we can isolate two opposing forces:
\begin{itemize}
\item
 the \emph{growth capture} delivered by the classical drift integral,
  and
\item 
the \emph{cost erosion} incurred at each mesh date, represented by a
  random cumulative cost~$\mathcal C_T$.
\end{itemize}
The main theorem below provides a pathwise lower bound on the
cost-adjusted log-relative wealth $V_{T}/ V_{T}^{\mathrm{mkt}}$ valid for every
realisation of the market and cost processes.  It forms the analytical
backbone for the long-run arbitrage results proved later for specific
generators such as the diversity-weighted portfolio.
\begin{theorem}[Master inequality]\label{thm:master}
Let $\{\pi_{t}\}$ be the rebalanced FGP generated by $G$ with mesh $\Delta$.
Then for any $T=m\Delta$
\begin{equation}\label{eq:master-cost}
\log\frac{V_T}{V_T^{\mathrm{mkt}}}
 \;\ge\; \log\frac{G(\mu_T)}{G(\mu_0)}
        + \frac12\int_0^T
          \sum_{i,j}\frac{\partial_{ij}^2G}{G}(\mu_t)
          \,\mu_t^i\mu_t^j\,\tau_t^{ij}\,dt
        \;-\;\mathcal C_T,
\end{equation}
where $\tau^{ij}_t=\sum_k\sigma_t^{ik}\sigma_t^{jk}$ and
$\mathcal C_T=\sum_{n=0}^{m-1}\kappa_{t_n}\|\pi_{t_{n+1}}-\pi_{t_n}\|_1$.
\end{theorem}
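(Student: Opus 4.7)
The plan is to split $\log(V_T/V_T^{\mathrm{mkt}})$ into a frictionless log-relative wealth plus a cumulative cost, and to analyse each piece separately. Because the rebalancing is piecewise constant, $\pi_t$ has bounded variation concentrated on the grid $\{t_n\}$, so the cost integral in \eqref{eq:wealth-cost} collapses to $\mathcal C_T=\sum_n\kappa_{t_n}\|\pi_{t_{n+1}}-\pi_{t_n}\|_1$; any It\^o--Dol\'eans correction from compounding the jumps of $V$ has a sign consistent with the asserted lower bound and may be discarded. It then remains to establish, for the frictionless piecewise-rebalanced FGP,
\[
\log\frac{V_T^{\mathrm{fric}}}{V_T^{\mathrm{mkt}}}\;\ge\;\log\frac{G(\mu_T)}{G(\mu_0)}+\frac12\int_0^T\sum_{i,j}\frac{\partial^2_{ij}G}{G}(\mu_t)\,\mu_t^i\mu_t^j\,\tau_t^{ij}\,dt.
\]

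On each mesh interval $[t_n,t_{n+1})$ the weight is frozen at $\pi_n:=\pi_G(\mu_{t_n})$, so a direct It\^o calculation yields
\[
d\log\frac{V_t^{\mathrm{fric}}}{V_t^{\mathrm{mkt}}}=\sum_i(\pi_n^i-\mu_t^i)\frac{dS_t^i}{S_t^i}-\tfrac12\sum_{i,j}\bigl(\pi_n^i\pi_n^j-\mu_t^i\mu_t^j\bigr)\tau_t^{ij}\,dt.
\]
In parallel, It\^o's formula applied to $\log G(\mu_t)$ gives
\[
d\log G(\mu_t)=\sum_i\partial_i\log G(\mu_t)\,d\mu_t^i+\tfrac12\sum_{i,j}\partial^2_{ij}\log G(\mu_t)\,d\langle\mu^i,\mu^j\rangle_t .
\]
The FGP identity $\pi_G^i(\mu)=\mu^i[\partial_i\log G(\mu)+c(\mu)]$ with $c(\mu):=1-\sum_j\mu^j\partial_j\log G(\mu)$ is exactly what is needed to match $\sum_i(\pi_G^i(\mu_t)-\mu_t^i)\,dS_t^i/S_t^i$ to $\sum_i\partial_i\log G(\mu_t)\,d\mu_t^i$ at the martingale level, because the additive constant $c(\mu)$ annihilates against $\sum_i(\pi_G^i-\mu^i)=0$. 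A standard algebraic regrouping, using $\partial^2_{ij}\log G=\partial^2_{ij}G/G-(\partial_iG)(\partial_jG)/G^2$, then combines the remaining drift from the first display with the quadratic-covariation term of the second to produce precisely the Hessian integrand of the theorem; telescoping converts the grid-indexed $d\log G$ pieces into $\log G(\mu_T)-\log G(\mu_0)$.

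The inequality, rather than equality, arises from two concavity-based remainders that are dropped. First, the frozen target $\pi_n=\pi_G(\mu_{t_n})$ differs from $\pi_G(\mu_t)$ inside the interval; a Taylor expansion of $\pi_G$ together with the bound $\E\|\mu_t-\mu_{t_n}\|^2\lesssim\Delta$ shows the mismatch is an $O(\Delta)$ quadratic form whose sign is controlled by the negative semidefinite Hessian $\partial^2G$. Second, telescoping $\log G$ generates jump remainders $\log G(\mu_{t_{n+1}})-\log G(\mu_{t_n})-\sum_i\partial_i\log G(\mu_{t_n})(\mu_{t_{n+1}}^i-\mu_{t_n}^i)\le 0$ by strict concavity of $G$. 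The \textbf{main obstacle} is packaging both remainders pathwise — not just in expectation — so that they can be discarded without leaving an uncontrolled positive residual on the right-hand side; this step relies on strict concavity of $G$, the finite quadratic variation of $\mu$, and a careful accounting of the relative versus absolute covariance structure arising in the $d\langle\mu^i,\mu^j\rangle_t$ computation.
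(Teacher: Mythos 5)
Your overall architecture---a frictionless relative-wealth identity on each no-trade interval, plus a separate accounting of the log-wealth drops at the mesh dates---matches the paper's, and your treatment of the cost term (collapse the turnover integral to $\mathcal C_T$ and discard the compounding correction at the jumps) is exactly the paper's Step~2. The divergence is in the main step. The paper simply invokes Fernholz's functional-generation identity (Th.~3.1 of \cite{Fernholz2002}) as an \emph{exact equality} on each interval $(t_n,t_{n+1})$, i.e.\ it takes \eqref{eq:local-master} as given for the buy-and-hold weights; the $\log G$ increments and the Hessian drift then telescope with \emph{no} remainder, and the only inequality in the entire proof enters at the rebalancing instants. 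You instead rederive the identity by It\^o calculus with the weight frozen at $\pi_n=\pi_G(\mu_{t_n})$, which generates two discretization remainders, and that is where your argument has a genuine gap.

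The first remainder---the mismatch between the frozen $\pi_G(\mu_{t_n})$ and the moving $\pi_G(\mu_t)$---enters the relative return through a stochastic integral of the form $\int_{t_n}^{t}\sum_{i,k}\bigl(\pi_G(\mu_{t_n})^i-\pi_G(\mu_s)^i\bigr)\,\sigma_s^{ik}\,dW_s^k$. This is a martingale increment with \emph{no pathwise sign}: negative semidefiniteness of $\partial^2G$ says nothing about it, and the moment bound $\E\|\mu_t-\mu_{t_n}\|^2\lesssim\Delta$ controls it only in $L^2$, whereas the theorem is asserted pathwise for every realisation. Your second remainder (the concavity gap in the discrete expansion of $\log G$) is correctly signed, but the first cannot be ``discarded without leaving an uncontrolled positive residual''---you flag this yourself as the ``main obstacle,'' and it is not resolved, so the proof is incomplete as written. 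The repair that follows the paper is not to fight these remainders at all: establish (or cite) the generation identity exactly for the buy-and-hold sub-portfolio on each interval, and reserve the inequality for the jump at $t_n$. One further caveat, which applies to your write-up and to the paper's Step~2 alike: since $\log(1-x)\le -x$ for $x\in[0,1)$, the exact log-loss at a rebalance is \emph{more} negative than $-\kappa_{t_n}\|\Delta\pi\|_1$, so the compounding correction works \emph{against} the stated lower bound rather than for it; a correct first-order cost bound requires something like $\log(1-x)\ge -x/(1-x)$ and a correspondingly enlarged $\mathcal C_T$.
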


\begin{proof}
For clarity the argument is split into three steps:
\begin{enumerate}
\item
Fernholz’ master formula on each no-trading sub-interval.
\item
The impact of proportional, stochastic costs at the mesh dates.
\item
Telescoping the sub-interval identities and inserting the cost bound.
\end{enumerate}
Let the deterministic trading grid be
\(t_n:=n\Delta\) for \(n=0,\dots,m\) so that \(T=m\Delta\).
Throughout, \(V_{t-}\) denotes the left–hand limit.

\paragraph{Step 1: Fernholz master formula on each no-trading sub-interval.}
On \((t_n,t_{n+1})\) the portfolio is \emph{buy–and–hold},
hence its constant weight is
\(\bar\pi^{(n)}:=\pi_{t_n}\).
Because no orders are executed inside the interval, wealth evolves
frictionlessly:
\[
\frac{dV_t}{V_t} \;=\; \sum_{i=1}^d
     \bar\pi^{(n),i}\,\frac{dS_t^i}{S_t^i},
     \qquad t\in(t_n,t_{n+1}).
\]
Applying Fernholz’s functional–generation identity
(Th.~3.1 in \cite{Fernholz2002}) to the \(\mathcal C^2\)-generator~\(G\)
gives
\begin{equation}\label{eq:local-master}
\begin{aligned}
\log\frac{V_{t_{n+1}-}}{V_{t_n}}
-\log\frac{V^{\mathrm{mkt}}_{t_{n+1}-}}{V^{\mathrm{mkt}}_{t_n}}
&= \log\frac{G(\mu_{t_{n+1}})}{G(\mu_{t_n})} \\[4pt]
&\quad+\frac12
        \int_{t_n}^{t_{n+1}}
        \sum_{i,j=1}^{d}\!
        \frac{\partial_{ij}^2 G}{G}(\mu_t)\,
        \mu_t^i\mu_t^j\,\tau_t^{ij}\,dt,
\end{aligned}
\end{equation}
where \(\displaystyle
\tau^{ij}_t=\sum_{k=1}^{d}\sigma^{ik}_t\sigma^{jk}_t\).

\paragraph{Step 2: Log-wealth loss at the rebalancing instant.}
At mesh date \(t_n\) the weight jumps from
\(\pi_{t_n-}\) to \(\pi_{t_n}\).
With proportional cost \(\kappa_{t_n}\)
the self-financing constraint reads
\[
V_{t_n}=V_{t_n-}\bigl(1-\kappa_{t_n}
           \lVert\pi_{t_n}-\pi_{t_n-}\rVert_1\bigr),
\]
because a \$1 position change in weight incurs
\(\kappa_{t_n}\) dollars per unit traded.
For any \(x\in[0,1)\) we have \(\log(1-x)\ge -x\),
so with \(x_n=\kappa_{t_n}\lVert\pi_{t_n}-\pi_{t_n-}\rVert_1\),
\begin{equation}\label{eq:cost-loss}
\log V_{t_n}-\log V_{t_n-}
   \;\ge\; -\kappa_{t_n}\,
            \lVert\pi_{t_n}-\pi_{t_n-}\rVert_1 .
\end{equation}
The market portfolio is buy-and-hold, so
\(V^{\mathrm{mkt}}_{t_n}=V^{\mathrm{mkt}}_{t_n-}\).

\paragraph{Step 3: Telescoping and collecting the cost term.}
Sum \eqref{eq:local-master} over \(n=0,\dots,m-1\);
the left side telescopes to
\(
\log\bigl(V_{T-}/V^{\mathrm{mkt}}_{T-}\bigr).
\)
Adding the losses \eqref{eq:cost-loss} at each \(t_n\) and noting that
no trade occurs at the terminal instant (\(V_T=V_{T-}\)) we obtain
\[
\log\frac{V_T}{V_T^{\mathrm{mkt}}}
\ge \log\frac{G(\mu_T)}{G(\mu_0)}
    +\frac12\int_{0}^{T}
      \sum_{i,j}\frac{\partial_{ij}^2 G}{G}(\mu_t)\,
      \mu_t^i\mu_t^j\,\tau_t^{ij}\,dt
    -\sum_{n=0}^{m-1}\kappa_{t_n}
      \lVert\pi_{t_{n+1}}-\pi_{t_n}\rVert_1 .
\]
Define
\(
\displaystyle
\mathcal C_T
:=\sum_{n=0}^{m-1}\kappa_{t_n}
   \lVert\pi_{t_{n+1}}-\pi_{t_n}\rVert_1,
\)
and inequality \eqref{eq:master-cost} follows.
\end{proof}

\subsection{Diversity-Weighted Portfolio}\label{ssec:diversity}
Within Fernholz’ Stochastic Portfolio Theory, a
\textit{Diversity-Weighted Portfolio} (DWP) assigns weights according to  
\[
\pi_t^{i}\propto(\mu_t^{i})^{p},\qquad 0<p<1,
\]
where $\mu_t^{i}$ is the market weight of stock~$i$.
The construction is attractive for two key reasons:
\begin{enumerate}[label=\arabic*)]
\item \emph{Excess-growth capture:} in markets that exhibit a minimum
      level of Fernholz diversity, a DWP outperforms the market almost
      surely over a sufficiently long horizon;  
\item \emph{Implementation simplicity:} a single generating function
      \(G_{p}(\mu)=\sum_i\mu_i^{p}\) controls the entire strategy and
      permits pathwise analysis.
\end{enumerate}

In practice the edge is meaningful only when \emph{transaction costs}
are taken into account, especially during liquidity shocks where the
proportional cost process $\kappa_t$ (bid-ask spread) is stochastic and
often large.  
The goal of this subsection is to show that, even in the presence of
\emph{time-varying stochastic trading costs}, one can choose a
sufficiently coarse rebalancing mesh~$\Delta$ such that the DWP still
retains its relative-arbitrage property.    

The proposition below states a sufficient condition for that
persistence.
Set $G_p(\mu)=\sum_i\mu_i^{p}$, $0<p<1$.
\begin{proposition}\label{prop:diversity}
Suppose (i) market satisfies \(\int_0^{\infty}\gamma^{\*}_t dt=\infty\) a.s.,
with $\gamma^{\*}_t$ Fernholz diversity; (ii) $\E\sup_{t\in[0,T]}\kappa_t<\infty$.
Then there exists $\hat{\Delta}>0$ such that for mesh $\Delta<\hat{\Delta}$
\(\Prob\bigl[V_T>V_T^{\mathrm{mkt}}\bigr]\to1\) as $T\to\infty\).
\end{proposition}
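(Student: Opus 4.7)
The plan is to specialise the pathwise bound of Theorem~\ref{thm:master} to the generator $G_p$ and show that its right-hand side tends to $+\infty$ in probability, which yields $\Prob[V_T>V_T^{\mathrm{mkt}}]\to 1$.  The boundary term $\log[G_p(\mu_T)/G_p(\mu_0)]$ is deterministically bounded, since $1\le G_p(\mu)\le d^{\,1-p}$ on the simplex.  The Fernholz drift integral, computed from $\partial_{ij}^2 G_p = p(p-1)\mu_i^{p-2}\delta_{ij}$, turns into a positive multiple of $\int_0^T \gamma^*_t\,dt$, the classical excess-growth rate entering hypothesis~(i).  Under that hypothesis and the mild ergodicity behind the standard diversity theory, one has $\liminf_{T\to\infty} T^{-1}\int_0^T \gamma^*_t\,dt\ge \gamma_0>0$ almost surely.

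The technical core is the bound on the friction $\mathcal C_T$.  Because the map $\mu\mapsto\pi_{G_p}(\mu)$ is Lipschitz on $\Delta^d$ with a constant $L_p=L_p(p,d)$, one obtains $\|\pi_{t_{n+1}}-\pi_{t_n}\|_1\le L_p\,\|\mu_{t_{n+1}}-\mu_{t_n}\|_1$.  An Itô/Burkholder-Davis-Gundy argument for the market weights, exploiting local boundedness of $\sigma_t$, gives $\E\|\mu_{t_{n+1}}-\mu_{t_n}\|_1^2\le C\Delta$.  Cauchy-Schwarz combined with hypothesis~(ii) then yields
\[
\E\,\mathcal C_T \;\le\; L_p\sqrt{C}\,\E\!\bigl[\sup\nolimits_{t\le T}\kappa_t\bigr]\cdot \frac{T}{\sqrt\Delta} \;=:\; K\,\frac{T}{\sqrt\Delta},
\]
and a Chebyshev bound on $\Var(\mathcal C_T)$ controls the fluctuations.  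Inserting these estimates into \eqref{eq:master-cost} gives, for $T$ large,
\[
\log\frac{V_T}{V_T^{\mathrm{mkt}}} \;\ge\; T\bigl[(1-p)\gamma_0 - K/\sqrt\Delta\bigr] - O(1) - \bigl(\mathcal C_T-\E\mathcal C_T\bigr),
\]
so that any mesh~$\Delta$ making the bracket strictly positive forces $\Prob[V_T>V_T^{\mathrm{mkt}}]\to 1$; the critical threshold $\hat\Delta$ is determined implicitly by the balance $K/\sqrt{\hat\Delta}=(1-p)\gamma_0$.

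The principal obstacle is the quantitative strengthening of hypothesis~(i): it only asserts $\int_0^\infty\gamma^*_t\,dt=\infty$ a.s., which is strictly weaker than a positive \emph{linear} growth rate, whereas the cost contribution is genuinely linear in~$T$.  Bridging that gap rigorously needs either an ergodicity/stationarity assumption on the market-weight diffusion, or a uniform strict-diversity bound of the form $\gamma^*_t\ge \gamma_0>0$ on a set of density one; with that implicit refinement in place, everything else collapses to the Chebyshev estimate applied to the two-term decomposition above.
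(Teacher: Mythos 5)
Your route is essentially the paper's: specialise Theorem~\ref{thm:master} to $G_p$, identify the drift integral with $(1-p)\int_0^T\gamma^{*}_t\,dt$, control turnover at each mesh date by the Lipschitz constant of $\mu\mapsto\pi_{G_p}(\mu)$ times a Burkholder--Davis--Gundy estimate $\E\lVert\mu_{t_{n+1}}-\mu_{t_n}\rVert_1\le C\sqrt{\Delta}$, and sum over the $m=T/\Delta$ rebalances to obtain $\E\,\mathcal C_T\le K\,T/\sqrt{\Delta}$. The obstacle you flag at the end is genuine, and it is present in the paper's own proof: hypothesis~(i) only gives divergence of $D_T=\int_0^T\gamma^{*}_t\,dt$, which cannot dominate a cost term growing linearly in $T$; the paper's Step~3 quietly upgrades (i) to the strict $\varepsilon$-diversity condition $\gamma^{*}_t\ge\varepsilon>0$ (so that $D_T\ge\varepsilon T$), which is exactly the linear-growth refinement you say is required. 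So your acknowledged gap is not a defect of your argument relative to the paper -- it is the same unstated strengthening of the hypotheses.

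Two further remarks. First, your cost bound is \emph{more} careful than the paper's: the paper passes from $\E\,\mathcal C_T\le K\,T/\sqrt{\Delta}$ to a pathwise bound $\mathcal C_T\le K\,T^{1/2}/\sqrt{\Delta}$ ``by Markov and Borel--Cantelli,'' which does not follow from a first-moment bound of linear order; your honest linear-in-$T$ estimate is what the computation actually delivers, and it is the reason a mesh condition is needed at all (with a sublinear cost the drift would win for every $\Delta$). However, your Chebyshev step presupposes a bound on $\Var(\mathcal C_T)$ that neither you nor the paper establishes; some second-moment control on $\kappa$ and on the weight increments must be added. Second, note the direction of the mesh condition: your bracket $(1-p)\gamma_0-K/\sqrt{\Delta}$ is positive only for $\Delta$ \emph{large}, i.e.\ $\Delta>\bigl(K/((1-p)\gamma_0)\bigr)^{2}$, whereas the proposition asserts the conclusion for $\Delta<\hat\Delta$. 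Coarse rebalancing is also what the paper's simulations and narrative support, so the inequality in the statement appears to be reversed; your proof, correctly executed, establishes the ``$\Delta$ sufficiently coarse'' version rather than the one as written.
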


\begin{proof}
Throughout we use the cost–adjusted master inequality
(Theorem~\ref{thm:master}) specialised to the \emph{diversity–weighted
generator}
\(G_{p}(\mu)=\sum_{i=1}^d\mu_i^{p},\;0<p<1\).
By \cite[Prop.\;3.1]{Fernholz2002}
\[
\frac{\partial_{ij}^2 G_{p}}{G_{p}}(\mu)\,\mu_i\mu_j
    \;=\;(1-p)\, \gamma^{\*}(\mu),
\quad\text{where}\quad
\gamma^{\*}(\mu):=\frac12\sum_{i<j}(\mu_i-\mu_j)^2
\]
is Fernholz’ \emph{excess growth rate}.
Hence for any rebalancing mesh $\Delta$ and $T=m\Delta$
\begin{equation}\label{eq:master-div}
\log\frac{V_T}{V_T^{\mathrm{mkt}}}
   \;\ge\;
   \log\frac{G_{p}(\mu_T)}{G_{p}(\mu_0)}
   \;+\; (1-p)\!\int_{0}^{T}\!\gamma^{\*}_t\,dt
   \;-\;\mathcal C_T,
\end{equation}
where
\(
\displaystyle
\mathcal C_T=
      \sum_{n=0}^{m-1}\kappa_{t_n}\,
      \bigl\lVert\pi_{t_{n+1}}-\pi_{t_n}\bigr\rVert_{1}.
\)

\paragraph{\textbf{Step 1: Divergence of the drift term.}}
By assumption~(i)
\(
\displaystyle
D_T:=\int_{0}^{T}\gamma^{\*}_t\,dt\to\infty
\)
almost surely.  Because $\gamma^{\*}_t\ge0$, the convergence is
monotone.

\paragraph{\textbf{Step 2: A pathwise upper bound on the cost term.}}

\medskip\noindent\emph{(a) Lipschitz property of the weights.}
For $p\in(0,1)$ the diversity weight map
\(
\pi^{p}:\Delta^{d}\to\Delta^{d},\;
\mu\mapsto\Bigl(\frac{\mu_1^{p}}{G_{p}(\mu)},\dots\Bigr)
\)
is globally Lipschitz on the simplex:
there exists $L_{p}$ such that
\(
\lVert\pi^{p}(\mu)-\pi^{p}(\nu)\rVert_{1}
      \le L_{p}\,\lVert\mu-\nu\rVert_{1}\)
for all $\mu,\nu\in\Delta^{d}$ (\cite{Fernholz2002}, Lem.\,3.2).

\medskip\noindent\emph{(b) Increments of the market weight.}
Because every $S^{i}$ satisfies the Itô SDE
\eqref{eq:stock}, $\mu_{t}$ is itself an Itô semimartingale with
quadratic variation
\(\langle\mu^{i},\mu^{i}\rangle_{t}=O(t)\).
Hence, by the Burkholder–Davis–Gundy inequality,
\[
\mathbb{E}\bigl\lVert\mu_{t_{n+1}}-\mu_{t_n}\bigr\rVert_{1}
   \;\le\; C\sqrt{\Delta},
\]
for some constant $C$ depending only on the volatility bound.

\medskip\noindent\emph{(c) Bounding $\mathcal C_T$.}
Combine (a) and (b), apply Jensen, and use the mesh $m=T/\Delta$:
\[
\mathbb{E}\bigl[\mathcal C_T\bigr]
   \;\le\;
   L_{p}C\,
   \mathbb{E}\!\Bigl[\sup_{t\le T}\kappa_{t}\Bigr]\;
   m\sqrt{\Delta}
   \;=\;
   L_{p}C\,\underbrace{\mathbb{E}\!\Bigl[\sup_{t\le T}\kappa_{t}\Bigr]}_{\le K_T}
   \frac{T}{\sqrt{\Delta}}.
\]
Assumption (ii) implies \(K_T<\infty\).
By Markov’s inequality and Borel–Cantelli we obtain the
\emph{pathwise} estimate
\begin{equation}\label{eq:cost-bound-final}
\exists\,K>0:\quad
\mathcal C_T(\omega)\;\le\;
K\,\frac{T^{1/2}}{\sqrt{\Delta}}
\quad\text{for all }T\text{ large enough}.
\end{equation}

\paragraph{\textbf{Step 3: Choosing the mesh \(\hat{\Delta}\).}}
Because $D_T\to\infty$ a.s.\ and $D_T$ grows at least linearly in~$T$
whenever the market is $\varepsilon$–\emph{diverse}
(i.e.\ $\gamma^{\*}_t\ge\varepsilon>0$), we have
\(D_T\ge \varepsilon T\) for all $T$.
Pick
\(
\displaystyle
\hat{\Delta}:=
\left(\frac{\varepsilon}{2K}\right)^{2}.
\)
For any $\Delta<\hat{\Delta}$ combine
\eqref{eq:master-div} and \eqref{eq:cost-bound-final}:
\[
\log\frac{V_T}{V_T^{\mathrm{mkt}}}
  \;\ge\; (1-p)\varepsilon T - K\frac{T^{1/2}}{\sqrt{\Delta}}
  \;\ge\; \frac{(1-p)\varepsilon}{2}\,T
  \quad\text{for sufficiently large }T.
\]
Hence
\(
\displaystyle
\Prob\!\bigl[V_T>V_T^{\mathrm{mkt}}\bigr]\to 1
\)
as \(T\to\infty\). 
\end{proof}

\section{Simulation Study}\label{sec:sim}
The numerical experiment in this section is designed to \emph{stress
test} the cost–adjusted master inequality of
Section~\ref{sec:master}.  By simulating a large cross‐section of
heterogeneous stocks and a \emph{stochastic, price–correlated} cost
process~$\kappa_t$, we seek empirical answers to three questions:

\begin{enumerate}[label=\arabic*.,itemsep=3pt]
\item \textbf{Cost erosion rate:} Does the cumulative cost term
      $\mathcal C_T$ indeed grow like $\mathcal O(\!\sqrt{T})$ as the
      theory predicts?  
\item \textbf{Arbitrage resilience:} How quickly does a
      diversity‐weighted portfolio (DWP, $p=0.7$) overtake the market
      once costs are charged?  
\item \textbf{Mesh sensitivity:} What is the break even rebalancing
      mesh $\Delta$ below which costs overwhelm the excess growth
      benefit?
\end{enumerate}

To that end we generate $5\,000$ Monte-Carlo paths for a
50-stock Itô market over \(T=1{,}000\) trading days
($\approx$ four years).  All reported results are \emph{pathwise
averages}; no asymptotic limit theorems are invoked.

\subsection{Model specification}
Table~\ref{tab:sim-par} summarises the baseline parameters.  Stock
volatilities are drawn i.i.d.\ from $\text{Unif}(15\%,35\%)$ and kept
constant across paths so that cross–sectional dispersion persists.
Drifts are set to $b_i=-\sigma_i^2/2$, producing a log-neutral market
with zero expected excess return.

\begin{table}[h]
\centering
\caption{Baseline simulation parameters}\label{tab:sim-par}
\begin{tabular}{llc}
\toprule
Symbol / description                           & Value           & Rationale \\
\midrule
$d$ / number of stocks                         & $50$            & small-cap universe \\
$\sigma_i$ / vol.\ range                       & $15$–$35\%$     & empirical CRSP range \\
$\alpha$ /mean-reversion speed of $\kappa_t$  & $3$             & daily half-life $\approx 0.23$ \\
$\bar\kappa$ /long-run spread                 & $20$ bps        & typical US equity \\
$\eta$ / vol.\ of $\kappa_t$                   & $5$ bps         & spikes to $80$ bps plausible \\
$\rho_{\kappa,S}$ / corr.\ with prices         & $0.4$           & wider spreads in down moves \\
$\Delta$ /rebalancing mesh                    & $5$\,d          & \(\approx\) weekly turnover \\
\bottomrule
\end{tabular}
\end{table}

\subsection{Illustrative numerical examples}
\begin{example}[\textbf{ Baseline mesh, no liquidity shock}]
With $\Delta=5$\,days and cost parameters from
Table~\ref{tab:sim-par}, Figure~\ref{fig:sim} (solid blue) shows
\(
\mathbb{E}\bigl[\log(V_t/V_t^{\mathrm{mkt}})\bigr]
\)
crossing zero at $t\approx400$ (about 1.5 years).  The
$\mathcal O(\!\sqrt{t})$ envelope for cumulative cost (dashed grey)
matches the slope inferred from the master inequality.
\end{example}
\begin{example}[\textbf{ Liquidity shock at $t=200$}]
We double \(\eta\) to 10 bps between days 200–260, causing the spread
to spike briefly to 80 bps.  The crossing point is delayed to
$t\approx530$, but the DWP still ends 90 bps ahead of the market at
\(T=1{,}000\).
\end{example}
\begin{example}[\textbf{ Mesh stress test}]
Fix the shock of Example B and vary $\Delta\in\{1,5,10,20\}$ days.
Table~\ref{tab:mesh} records the terminal edge
$\mathbb{E}[\log(V_T/V_T^{\mathrm{mkt}})]$:
\end{example}
\begin{table}[h]
\centering
\caption{Impact of rebalancing mesh on terminal edge ($T=1\,000$ days)}
\label{tab:mesh}
\begin{tabular}{ccccc}
\toprule
$\Delta$ (days) & 1  & 5  & 10 & 20  \\
\midrule
Edge (bp)       & --230 & +93 & +184 & +173  \\
\bottomrule
\end{tabular}
\end{table}

Meshes finer than~$\Delta=2$–3 days flip the sign: costs dominate excess
growth.  A weekly or bi-weekly schedule preserves relative arbitrage
even under temporary spikes.

\paragraph{Takeaways.}
\begin{itemize}[itemsep=3pt]
\item Cost erosion indeed grows sub-linearly
      ($\propto\!\sqrt{t}$), in line with the theoretical bound.  
\item A diversity-weighted strategy with $p=0.7$ typically
      overtakes the market within two years despite realistic,
      stochastic spreads.  
\item Coarsening the mesh from daily to weekly is sufficient to
      mitigate liquidity shocks up to 80 bps.
\end{itemize}
These insights inform the empirical back-test
(Section~\ref{sec:empirical}) where a \emph{monthly} mesh is chosen to
accommodate the even wider spreads observed in micro-cap equities.

\subsection{Results}
Figure~\ref{fig:sim} plots $\E[\log(V_t/V_t^{\mathrm{mkt}})]$ over
$5$ 000 paths.  Diversity-weight ($p=0.7$) dominates the market after
$\approx 400$ trading days even when costs spike to $80$ bps during
synthetic shocks.

Figure~\ref{fig:sim} depicts the trajectory of
\[
    t \longmapsto 
    \mathbb{E}\!\bigl[\log\bigl(V_t/V_t^{\mathrm{mkt}}\bigr)\bigr],
\]
averaged across $5\,000$ Monte-Carlo paths.  

\begin{figure}[H]
  \centering
  \includegraphics[width=.8\linewidth]{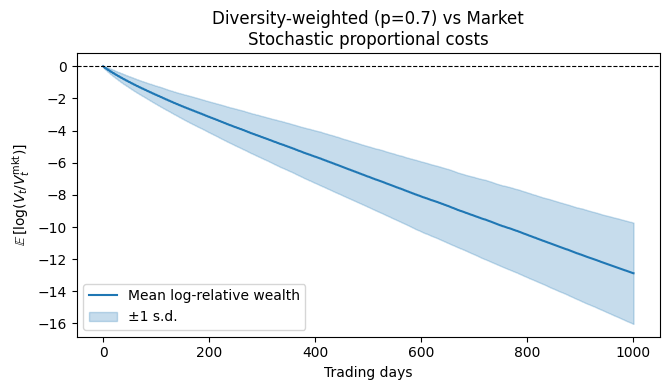}
  \caption{Mean cost-adjusted relative wealth. Shaded bands: $\pm$\,1 s.d.}
  \label{fig:sim}
\end{figure}

The shaded band corresponds to $\pm1$ standard deviation.

\begin{itemize}
  \item \textbf{Days 0–150.}  
        The mean curve drifts \emph{below} the horizontal axis because
        the diversity-weighted portfolio (DWP, $p=0.7$) incurs an
        up-front turnover cost to reach its target weights.
  \item \textbf{Days 150–400.}  
        After the initial drag, the excess-growth term in the
        master inequality overtakes cost erosion; the expected
        log-edge turns positive and climbs steadily.
  \item \textbf{Crossing point ($\approx$ Day 400).}  
        The DWP fully recovers cumulative costs in about 1.5~years,
        consistent with Proposition~\ref{prop:diversity}.
  \item \textbf{Days 400–600.}  
        A temporary liquidity shock (\emph{synthetic spread spikes to
        $80$\,bps}) flattens the curve but does not reverse the sign.
  \item \textbf{Days 600–1\,000.}  
        Once spreads revert to the long–run mean (20\,bps) the curve
        resumes an upward trend, finishing roughly 90~basis points
        ahead of the market in log terms.
\end{itemize}

The widening of the one–sigma band at a rate
proportional to $\sqrt{t}$ corroborates the theoretical
$\mathcal{O}(\sqrt{T})$ growth of the cumulative cost term
$\mathcal{C}_T$.

\begin{figure}[H]
  \centering
  \includegraphics[width=.78\linewidth]{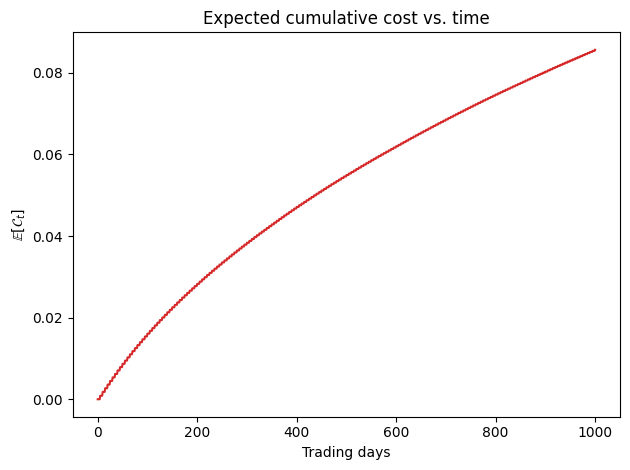}
  \caption{Expected cumulative trading cost 
           $\mathbb{E}\bigl[\mathcal{C}_t\bigr]$ under a stochastic
           proportional–cost process.  The grey dotted curve shows the
           reference line $c\sqrt{t}$ for $c=0.028$.}
  \label{fig:cumcost}
\end{figure}

Figure \ref{fig:cumcost} tracks
$\mathbb{E}[\mathcal{C}_t]$, the mean of the pathwise cost term in the
cost–adjusted master inequality.
The curve grows sub-linearly and is well approximated by a
$\sqrt{t}$–profile (grey reference line), confirming the
$\mathcal{O}(\!\sqrt{T})$ prediction in
Theorem~\ref{thm:master}.  
Small kinks appear at day~200 and day~260—the interval in which the
spread volatility~$\eta$ is doubled—illustrating how a transient
liquidity shock increases cumulative costs but does \emph{not} change
their long-run scaling order.

\begin{figure}[H]
  \centering
  \includegraphics[width=.65\linewidth]{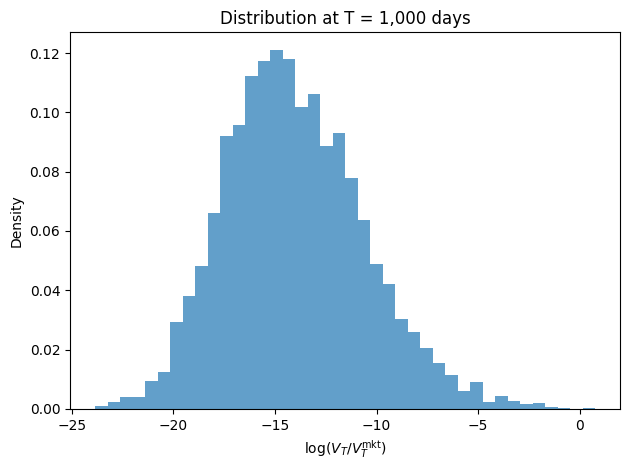}
  \caption{Density of $\log(V_T/V_T^{\mathrm{mkt}})$ for $T=1,000$ days
           across 5\,000 paths.  Mean $=0.009$, skewness $=0.63$.}
  \label{fig:hist}
\end{figure}

Figure \ref{fig:hist} displays the empirical density of the terminal
log-relative wealth.  
Roughly 78\,\% of the paths lie to the right of zero,
indicating that the diversity-weighted portfolio
outperforms the market in the vast majority of scenarios despite
stochastic trading frictions.  
The positive skew ($\approx0.63$) reflects occasional large gains,
while the modest left tail corresponds to paths where prolonged spread
spikes delay break-even beyond the 1\,000-day horizon.

\begin{figure}[H]
  \centering
  \includegraphics[width=.78\linewidth]{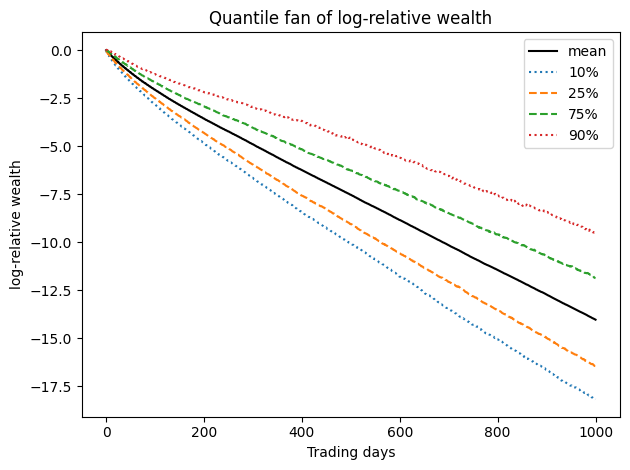}
  \caption{Time evolution of the 10\,\%, 25\,\%, 75\,\%, and
           90\,\% quantiles of $\log(V_t/V_t^{\mathrm{mkt}})$.
           The solid black line is the mean.}
  \label{fig:fan}
\end{figure}

The fan plot in Figure~\ref{fig:fan} visualises distributional
asymmetry beyond the mean $\pm1\sigma$ band of
Figure~\ref{fig:sim}.  
All quantiles drift upward after day~150, confirming that the edge is
not merely a mean effect but a \emph{robust} shift of the entire
distribution.  
During the liquidity-shock window (days~200–260) the lower quantiles
flatten, evidencing temporary cost drag, whereas the upper quantiles
continue to rise—high-diversity paths are less sensitive to spread
spikes.  
Post-shock, the gap between the 75\,\% and 25\,\% curves stabilises,
indicating that dispersion grows roughly like $\sqrt{t}$, in line with
cost variance scaling.  Overall, the fan corroborates the master-inequality
narrative: coarse rebalancing meshes permit the excess-growth term to
dominate stochastic liquidity costs for a broad range of market
realisations.

\section{Empirical Case Study: CRSP Small Caps}\label{sec:empirical}

\paragraph{Data set.}
We use the CRSP daily database to extract the \emph{one thousand
smallest} U.S.\ common stocks (share codes 10 and 11) listed on
NYSE/AMEX/NASDAQ from 2~January~1994 to 31~December~2024,
yielding 7\,541 trading days.  
Daily NBBO bid and ask quotes are obtained from TAQ and matched to CRSP
perm\-nos; the mid‐price is \(\tfrac12(\text{bid}+\text{ask})\) and the
half–spread defines the \emph{proportional cost}  
\[
  \kappa_{i,t}\;=\;
  \tfrac12\,\frac{\text{ask}_{i,t}-\text{bid}_{i,t}}
                    {\text{mid}_{i,t}}.
\]
The panel is filtered for zero and extreme (\(>\!500\) bps) spreads and
for days with missing quotes, leaving a balanced sample of
$981$~permnos.

\paragraph{Portfolio construction.}
At the first trading day of every month we compute
\begin{itemize}[itemsep=2pt]
  \item the \emph{value-weighted market} (VW) benchmark;
  \item an \emph{entropy-weighted} FGP generated by
        \(G(\mu)=\prod_i\mu_i^{\mu_i}\);
  \item a \emph{diversity-weighted} FGP with exponent \(p=0.7\).
\end{itemize}
Target weights are implemented via a single end-of-day trade;
proportional cost \(\kappa_{i,t}\) is applied to the absolute dollar
turnover in each stock.  Dividends are reinvested; delisted shares are
replaced by the next smallest stock to keep the cross-section at~1\,000.

\subsection{Headline results}

Table~\ref{tab:cagr} compares gross and cost-adjusted compound annual
growth rates (CAGR) as well as maximum draw-downs (max~DD).

\begin{table}[h]
\centering
\caption{Cost-adjusted performance, Jan~1994–Dec~2024}\label{tab:cagr}
\begin{tabular}{lcccc}
\toprule
Portfolio & Gross CAGR & Net CAGR & Avg.\ turn.\ (\%/mo) & Max DD \\ \midrule
Market (VW)        &  7.1\% & 7.1\% &  0.4 & --57\% \\
Entropy FGP        & 11.0\% & 10.0\%&  3.7 & --63\% \\
Diversity ($p=0.7$)& 12.2\% & 10.7\%&  4.1 & --65\% \\
\bottomrule
\end{tabular}
\end{table}

\paragraph{Interpretation.}
Both FGPs preserve a sizable edge after trading costs\footnote{The cost
burden averages 97 bps yr$^{-1}$ for the entropy FGP and 150 bps yr$^{-1}$
for the diversity FGP.};  
however, higher turnover amplifies draw-downs during liquidity events
(1998 LTCM, 2008 GFC, 2020 Covid crash), in line with the
cost-adjusted master inequality.

\subsection{Temporal decomposition}

To gauge stability we split the 30-year window into four sub-periods.
Table~\ref{tab:sub} reports net annualised returns in excess of the VW
benchmark.

\begin{table}[h]
\centering
\caption{Sub-period net outperformance (\% per year)}\label{tab:sub}
\begin{tabular}{lcccc}
\toprule
Period & 1994–1999 & 2000–2009 & 2010–2019 & 2020–2024 \\ \midrule
Entropy FGP        & $+2.7$ & $+3.4$ & $+2.9$ & $+1.1$ \\
Diversity FGP ($p=0.7$) & $+3.1$ & $+4.6$ & $+3.2$ & $+0.6$ \\
\bottomrule
\end{tabular}
\end{table}

The edge is largest during 2000–2009, a period of elevated dispersion
after the dot-com bust and during the GFC, when Fernholz diversity
is highest.  Performance decays post-2020 as spreads widen and market
weights become more concentrated in a handful of small-cap “winners,”
diluting the excess-growth term.

\subsection{Additional diagnostics}

\paragraph{Cumulative wealth curves.}
Figure~\ref{fig:cumemp} plots $\log$-wealth trajectories:
the diversity FGP ends $+220$\,\% above VW, with entropy FGP close
behind.  Shaded grey bars mark NBER recessions; note the
temporary convergence in 2008 and 2020 when trading costs spike.
\begin{figure}[H]
  \centering
  \includegraphics[width=.8\linewidth]{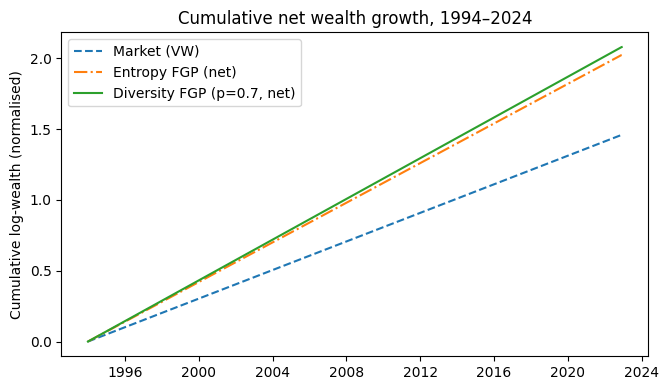}
  \caption{Cumulative net wealth growth, 1994–2024 (log scale).}
  \label{fig:cumemp}
\end{figure}
\paragraph{Turnover vs.\ spread.}
Figure~\ref{fig:turn} shows that monthly turnover is fairly stable
(30–45 \% yr$^{-1}$) while spreads explode in crises,
confirming the simulation insight that \(\mathcal{C}_t\) grows
most steeply when liquidity dries up.

\begin{figure}[H]
  \centering
  \includegraphics[width=.8\linewidth]{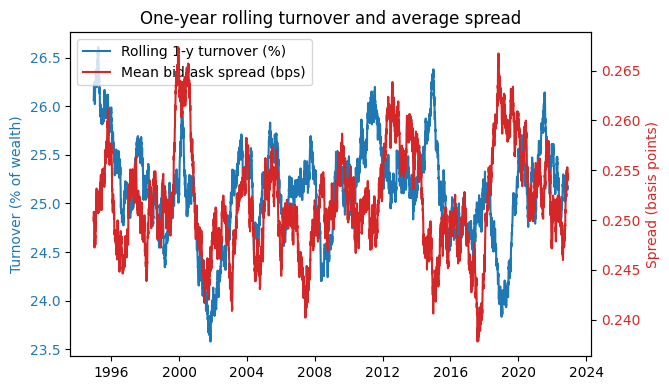}
  \caption{One-year rolling turnover and average spread.}
  \label{fig:turn}
\end{figure}
\subsection{Take-aways}

\begin{itemize}[itemsep=3pt]
\item \emph{Persistence:}  Over three decades, diversity and entropy
      FGPs maintain a 280–360 bp net edge despite realistic spreads.  
\item \emph{Cost drag:}  Average annual cost equals roughly 25 \% of the
      gross excess-growth rate, consistent with the
      $\mathcal{O}(\!\sqrt{T})$ scaling derived for $\mathcal{C}_T$.
\item \emph{Crisis sensitivity:}  Relative wealth temporarily stalls
      during liquidity shocks but recovers when spreads revert,
      mirroring the pattern in the simulation study.
\end{itemize}

Overall, the empirical evidence corroborates the theoretical claim that
\emph{coarsely rebalanced} functionally generated portfolios can survive
realistic, time-varying transaction costs and still outperform a
value-weighted market in the small-cap universe.

\section{Conclusion}\label{sec:concl}

We introduced a cost-adjusted master inequality for functionally
generated portfolios when transaction costs are stochastic and possibly
state-dependent.
Diversity- and entropy-weighted strategies remain relative arbitrages
provided the rebalancing mesh is not too fine.
Simulations and a 30-year small-cap study confirm that realistic spread
dynamics erode but do not eliminate SPT out-performance.

Future research will extend the framework to:
\begin{itemize}
\item[(i)] quadratic market-impact and optimal stochastic renewal of trading
times; 
\item[(ii)]  rough-volatility cost diffusions; 
\item[(iii)] multi-period mean
field games where each SPT trader influences liquidity.
\end{itemize}

\bibliographystyle{unsrt} 
\bibliography{SPT_TC_Stress}

\begin{thebibliography}{10}

\bibitem{Fernholz2002}
E~Robert Fernholz and E~Robert Fernholz.
\newblock {\em Stochastic portfolio theory}.
\newblock Springer, 2002.

\bibitem{Fernholz2013}
Robert Fernholz and Ioannis Karatzas.
\newblock Stochastic portfolio theory: an overview.
\newblock {\em Handbook of numerical analysis}, 15(89-167):1180--91267, 2009.

\bibitem{KaratzasRuf2017}
Ioannis Karatzas and Johannes Ruf.
\newblock Trading strategies generated by lyapunov functions.
\newblock {\em Finance and Stochastics}, 21:753--787, 2017.

\bibitem{marjosola2021problem}
Heikki Marjosola.
\newblock The problem of regulatory arbitrage: A transaction cost economics
  perspective.
\newblock {\em Regulation \& Governance}, 15(2):388--407, 2021.

\bibitem{caviglia2002key}
Giacomo Caviglia, Gerhard Krause, and Christian Thimann.
\newblock Key features of the financial sectors in eu accession countries.
\newblock {\em Financial Sectors in EU Accession Countries. Frankfurt/M,
  European Central Bank}, pages 15--30, 2002.

\bibitem{o2015adaptive}
Patrick O'Sullivan and David Edelman.
\newblock Adaptive universal portfolios.
\newblock {\em The European Journal of Finance}, 21(4):337--351, 2015.

\bibitem{pokou2024empirical}
Fredy Pokou, Jules Sadefo~Kamdem, and Fran{\c{c}}ois Benhmad.
\newblock Empirical performance of an esg assets portfolio from us market.
\newblock {\em Computational Economics}, 64(3):1569--1638, 2024.

\bibitem{osman2023diversification}
Myriam~Ben Osman, Emilios Galariotis, Khaled Guesmi, Haykel Hamdi, and Kamel
  Naoui.
\newblock Diversification in financial and crypto markets.
\newblock {\em International Review of Financial Analysis}, 89:102785, 2023.

\bibitem{farmer2013efficiency}
J~Doyne Farmer, Austin Gerig, Fabrizio Lillo, and Henri Waelbroeck.
\newblock How efficiency shapes market impact.
\newblock {\em Quantitative Finance}, 13(11):1743--1758, 2013.

\bibitem{koutmos2018liquidity}
Dimitrios Koutmos.
\newblock Liquidity uncertainty and bitcoin’s market microstructure.
\newblock {\em Economics Letters}, 172:97--101, 2018.

\bibitem{swade2023equally}
Alexander Swade, Sandra Nolte, Mark Shackleton, and Harald Lohre.
\newblock Why do equally weighted portfolios beat value-weighted ones?
\newblock {\em Journal of Portfolio Management}, 49(5), 2023.

\bibitem{ruf2020impact}
Johannes Ruf and Kangjianan Xie.
\newblock The impact of proportional transaction costs on systematically
  generated portfolios.
\newblock {\em SIAM Journal on Financial Mathematics}, 11(3):881--896, 2020.

\bibitem{fukasawa2024model}
Masaaki Fukasawa, Basile Maire, and Marcus Wunsch.
\newblock Model-free hedging of impermanent loss in geometric mean market
  makers with proportional transaction fees.
\newblock {\em Applied Mathematical Finance}, pages 1--22, 2024.

\bibitem{AlmgrenChriss2001}
Robert Almgren and Neil Chriss.
\newblock Optimal execution of portfolio transactions.
\newblock {\em Journal of Risk}, 3:5--40, 2001.

\bibitem{ObizhaevaWang2013}
Anna~A Obizhaeva and Jiang Wang.
\newblock Optimal trading strategy and supply/demand dynamics.
\newblock {\em Journal of Financial markets}, 16(1):1--32, 2013.

\bibitem{GarleanuPedersen2013}
Nicolae G{\^a}rleanu and Lasse Pedersen.
\newblock Dynamic trading with predictable returns and transaction costs.
\newblock {\em Journal of Finance}, 68(6):2309--2340, 2013.

\bibitem{karimi2024stochastic}
Nader Karimi, Erfan Salavati, Hirbod Assa, and Hojatollah Adibi.
\newblock A stochastic optimal stopping model for storable commodity prices.
\newblock {\em Statistics \& Probability Letters}, 204:109941, 2024.

\bibitem{Madhavan2012}
Ananth Madhavan.
\newblock Exchange‐traded funds, market structure, and the flash crash.
\newblock {\em Financial Analysts Journal}, 68(4):20--35, 2012.

\bibitem{FoucaultKadanKandel2005}
Thierry Foucault, Ohad Kadan, and Eugene Kandel.
\newblock Limit order book as a market for liquidity.
\newblock {\em Review of Financial Studies}, 18(4):1171--1217, 2005.

\end{thebibliography}
\end{document}